\documentclass[11pt]{article}

\usepackage[letterpaper,margin=1in]{geometry}
\usepackage[T1]{fontenc}
\usepackage[utf8]{inputenc}
\usepackage{lmodern}
\usepackage{microtype}
\usepackage{xcolor}
\usepackage{amsmath,amssymb,amsthm}
\usepackage{enumitem}
\usepackage[authoryear,round]{natbib}
\usepackage[colorlinks=true,linkcolor=blue,citecolor=blue,urlcolor=blue]{hyperref}
\AtBeginDocument{\lefthyphenmin=2\righthyphenmin=3}

\newcommand{\N}{\mathbb N}
\newcommand{\Np}{\mathbb N_{\geq 1}}
\newcommand{\hashPone}{\mathsf{\#P}_1}
\newcommand{\FP}{\mathsf{FP}}
\newcommand{\poly}{\operatorname{poly}}
\newcommand{\bin}{\operatorname{bin}}
\newcommand{\rad}{\operatorname{rad}}
\definecolor{revision}{RGB}{0,0,0}
\definecolor{revisiontwo}{RGB}{0,0,0}
\definecolor{revisionthree}{RGB}{0,0,0}
\definecolor{revisionfour}{RGB}{0,0,0}
\definecolor{revisionfive}{RGB}{0,0,0}
\definecolor{revisionsix}{RGB}{0,0,0}
\definecolor{revisionseven}{RGB}{0,0,0}
\definecolor{revisioneight}{RGB}{0,0,0}
\definecolor{revisionnine}{RGB}{0,0,0}
\definecolor{revisionten}{RGB}{0,0,0}
\definecolor{revisioneleven}{RGB}{0,0,0}
\definecolor{revisiontwelve}{RGB}{0,0,0}
\definecolor{revisionthirteen}{RGB}{0,0,0}
\definecolor{revisionfourteen}{RGB}{0,0,0}
\definecolor{revisionfifteen}{RGB}{0,0,0}
\definecolor{revisionsixteen}{RGB}{0,0,0}
\definecolor{revisionseventeen}{RGB}{0,0,0}
\definecolor{revisioneighteen}{RGB}{0,0,0}
\definecolor{revisionnineteen}{RGB}{0,0,0}
\definecolor{revisiontwenty}{RGB}{0,0,0}
\definecolor{revisiontwentyone}{RGB}{0,0,0}
\newcommand{\rev}[1]{\textcolor{revision}{#1}}

\newcommand{\revprec}[1]{\textcolor{revisionfour}{#1}}
\newcommand{\revreduce}[1]{\textcolor{revisionfive}{#1}}

\newcommand{\revtodo}[1]{\textcolor{revisionthirteen}{#1}}
\newcommand{\revfinal}[1]{\textcolor{revisionfourteen}{#1}}

\definecolor{currentrevision}{RGB}{0,0,0}

\definecolor{revisiontwentytwo}{RGB}{0,0,0}

\definecolor{revisiontwentythree}{RGB}{0,0,0}
\newcommand{\revreadability}[1]{\textcolor{revisiontwentythree}{#1}}
\definecolor{revisiontwentyfour}{RGB}{0,0,0}

\definecolor{revisiontwentyfive}{RGB}{0,0,0}
\newcommand{\revglobal}[1]{\textcolor{revisiontwentyfive}{#1}}
\definecolor{revisiontwentysix}{RGB}{0,0,0}
\newcommand{\revbridge}[1]{\textcolor{revisiontwentysix}{#1}}
\definecolor{revisiontwentyseven}{RGB}{0,0,0}

\definecolor{revisiontwentyeight}{RGB}{0,0,0}

\definecolor{revisionthirty}{RGB}{0,0,0}

\definecolor{revisionthirtyone}{RGB}{0,0,0}

\definecolor{revisionthirtytwo}{RGB}{0,0,0}
\newcommand{\revframing}[1]{\textcolor{revisionthirtytwo}{#1}}
\definecolor{revisionthirtythree}{RGB}{0,0,0}

\definecolor{revisionthirtyfour}{RGB}{0,0,0}

\definecolor{revisionthirtyfive}{RGB}{0,0,0}

\definecolor{revisionthirtysix}{RGB}{0,0,0}

\definecolor{revisionthirtyseven}{RGB}{0,0,0}
\newcommand{\revseam}[1]{\textcolor{revisionthirtyseven}{#1}}
\definecolor{revisionthirtyeight}{RGB}{0,0,0}
\newcommand{\revstyle}[1]{\textcolor{revisionthirtyeight}{#1}}
\definecolor{revisionthirtynine}{RGB}{0,0,0}
\newcommand{\revrepair}[1]{\textcolor{revisionthirtynine}{#1}}
\definecolor{revisionforty}{RGB}{0,0,0}
\newcommand{\revproof}[1]{\textcolor{revisionforty}{#1}}
\definecolor{revisionfortyone}{RGB}{0,0,0}

\definecolor{revisionfortytwo}{RGB}{0,0,0}

\definecolor{revisionfortythree}{RGB}{0,0,0}

\definecolor{revisionfortyfour}{RGB}{0,0,0}
\newcommand{\revflow}[1]{\textcolor{revisionfortyfour}{#1}}
\definecolor{revisionfortysix}{RGB}{0,0,0}
\newcommand{\revmerged}[1]{\textcolor{revisionfortysix}{#1}}
\definecolor{revisionfortyseven}{RGB}{0,0,0}

\definecolor{revisionfortyeight}{RGB}{0,0,0}
\newcommand{\revreferee}[1]{\textcolor{revisionfortyeight}{#1}}
\newtheorem{theorem}{Theorem}
\newtheorem{lemma}[theorem]{Lemma}
\newtheorem{corollary}[theorem]{Corollary}

\hypersetup{
  pdftitle={Accepting-Path Counting at the One-Tape n log n Threshold},
  pdfauthor={Ondrej Kuzelka},
  pdfsubject={Accepting-path counting on one-tape Turing machines}
}

\title{Accepting-Path Counting at the One-Tape $n\log n$ Threshold}
\author{Ond\v{r}ej Ku\v{z}elka\\
Faculty of Electrical Engineering, Czech Technical University in Prague\\
Prague, Czech Republic}
\date{}

\begin{document}
\maketitle

\begin{abstract}
{\color{revisionthirtysix}
We observe that the classical $n\log n$ time threshold for one-tape Turing machines is also a threshold for their accepting-path counts. Below it, every nondeterministic one-tape machine running in strong $o(n\log n)$ time has a rational ordinary generating function of accepting-path counts. At strong $O(n\log n)$ time, the situation changes completely: there is a fixed one-tape machine whose accepting-path function is complete for $\hashPone$, the tally analogue of $\#\mathsf P$, under parsimonious polynomial-time tally reductions. A second construction within the same time bound gives positive accepting-path counts with a noncomputable exponential growth rate. \revrepair{The rationality result combines the one-tape time gap with the linear-time counting theorem of Tadaki, Yamakami and Lin. \revmerged{The completeness proof adapts the linear-time universal counting machine of Beame et al. to the one-tape setting.}}
}
\end{abstract}

\section{Introduction}

{\color{revisionthree}
Valiant's class $\#\mathsf P$ captures the complexity of counting accepting computations of polynomial-time nondeterministic machines \citep{Valiant1979}.
Its tally analogue, \revtodo{denoted by $\hashPone$}, asks the same question on unary inputs: a function $f:\Np\to\N$ belongs to $\hashPone$ if some nondeterministic polynomial-time machine has exactly $f(n)$ accepting paths on input $1^n$.
We ask how much time is needed before a machine restricted to one read/write tape and one head can have an accepting-path counting function that is \revprec{$\hashPone$-complete under a parsimonious polynomial-time tally reduction}.

\revstyle{This question is motivated by other settings in which a fixed object gives rise to a $\hashPone$-complete counting sequence.}
\revbridge{For example, under polynomial-time oracle reductions, even fixed languages recognized by surprisingly weak devices can give rise to $\hashPone$-complete sequences counting the number of words of each length that belong to the language \citep{BertoniGoldwurm1993}.}
Likewise, first-order model counting fixes a \revseam{sentence of first-order logic} and counts its \revtodo{models} of each finite size; Beame, Van den Broeck, Gribkoff and Suciu exhibited a fixed three-variable sentence with a $\hashPone$-complete model-counting sequence \revreduce{under the same type of reduction} \citeyearpar{BeameEtAl2015}.
\revmerged{Beame et al. also show that, when several tapes are available, $\hashPone$-hard accepting-path counting is possible already in linear time: they construct a fixed linear-time counting machine whose accepting-path function is $\hashPone$-hard \citeyearpar{BeameEtAl2015}.}

{\color{revisionthirtysix}
Our answer to the question of when $\hashPone$-complete accepting-path counting becomes possible on a one-tape machine is as follows. Below $n\log n$, strong $o(n\log n)$ time forces the accepting-path generating function to be rational. Such a fixed rational sequence is polynomial-time computable, so a $\hashPone$-hard accepting-path function below the threshold under our reductions would imply $\hashPone\subseteq\FP$. \revreferee{Thus, under the standard assumption $\hashPone\nsubseteq\FP$, hardness cannot occur below the threshold.} At strong $O(n\log n)$ time, the opposite extreme is already possible: there is a fixed one-tape machine with a $\hashPone$-complete accepting-path function. \revrepair{The lower boundary is the classical one-tape time gap, while the upper result shows that the same threshold already suffices for $\hashPone$-complete accepting-path counting.}
}

\revbridge{For a fixed nondeterministic one-tape machine $M$, let $a_M(n)$ denote its number of accepting paths on $1^n$, and define the corresponding ordinary generating function by}
\[
 A_M(z)=\sum_{n\geq0}a_M(n)z^n.
\]
\revstyle{The main result is the following.}
}

\begin{theorem}\label{thm:main}
\revglobal{For nondeterministic one-tape machines, with running time measured as the maximum over all computation paths, the following hold.}
\begin{enumerate}[label=\textup{(\roman*)},leftmargin=*,itemsep=3pt]
\item\label{item:rational}
If $M$ runs in $o(n\log n)$ time, then $A_M(z)$ is rational over $\mathbb Q$.
\item\label{item:universal}
{\color{revisionthirtysix}
There is a fixed one-tape machine $U$ running in strong $O(n\log n)$ time such that, for every $f\in\hashPone$, there is a polynomially bounded, polynomial-time computable map $p_f:\Np\to\Np$ satisfying
\[
 f(j)=a_U(p_f(j))\qquad(j\geq1).
\]
Thus $a_U$ is $\hashPone$-complete under parsimonious polynomial-time tally reductions.
}
\item\label{item:radius}
There is a fixed machine $R$ running in \revreferee{strong $O(n\log n)$ time} such that $a_R(n)>0$ and the limit $\lim_{n\to\infty}a_R(n)^{1/n}$ exists and is a noncomputable real.
Consequently, $A_R(z)$ has a noncomputable radius of convergence.
\end{enumerate}
\end{theorem}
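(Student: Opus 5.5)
The three parts need separate arguments; I take them in order.

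\emph{Part (i).} I would invoke the classical one-tape time gap (Hennie, Kobayashi, Tadaki--Yamakami--Lin) in its strong form for nondeterministic machines. If every computation path runs in $o(n\log n)$ time, then every crossing sequence has length bounded by a constant $c$, and $M$ in fact runs in linear time. Next I would apply the linear-time counting theorem of Tadaki, Yamakami and Lin. It turns a one-tape linear-time nondeterministic machine with bounded crossing sequences into a finite automaton with multiplicities, that is, a weighted automaton over $\mathbb N$. The states of this automaton are crossing sequences of length at most $c$, and accepting paths of $M$ correspond bijectively to weighted accepting runs. On unary inputs $1^n$, a weighted automaton over a one-letter alphabet computes $a_M(n)=u^{\top}T^n v$ for a fixed nonnegative integer matrix $T$ and fixed vectors $u,v$. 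Hence $A_M(z)=u^{\top}(I-zT)^{-1}v$, which is rational over $\mathbb Q$. The one delicate point is that the correspondence must be exactly parsimonious. Two issues arise:
\begin{itemize}[leftmargin=*,itemsep=2pt]
\item Nondeterministic choices made while the head sits in the blank region beyond the input must be counted.
\item Bounded crossing sequences alone allow the head to wander arbitrarily far.
\end{itemize}
Linear time bounds the excursion length by $Cn$, but I would rather rely on the cited theorem, which handles both boundary cells.

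\emph{Part (ii).} I would adapt the universal counting machine of Beame et al. Fix an enumeration of $\hashPone$ witnesses as clocked nondeterministic machines $N_i$ running in time $j^{i}+i$. The input length $n=p_f(j)$ encodes the pair $(i,j)$, for example by writing $n$ in binary and padding so that $n\ge 2^{\text{(description)}}\cdot\poly(j^{i})$. The padding is chosen so that the simulation length $m=j^i+i$ satisfies $m\log m\ll n\log n$, while $n$ stays polynomial in $j$. Machine $U$ proceeds in four phases:
\begin{enumerate}[leftmargin=*,itemsep=2pt]
\item It deterministically computes $\bin(n)$ by repeated halving sweeps, carrying a block-shifted binary counter. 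This costs $O(n\log n)$ in the standard one-tape manner.
\item It decodes $i$ and $j$ from $\bin(n)$.
\item It writes $1^j$ followed by a clock of length $m$.
\item It simulates $N_i$ step by step while carrying $i$'s transition table and the clock in a compact block that moves with the head. Each simulated step then costs $O(|i|+\log m)$ moves.
\end{enumerate}
The total time is $O(m\,\poly(|i|)\log m)$, which the padding makes $O(n\log n)$ on \emph{every} path. All phases other than the simulation are deterministic, and the simulation branches exactly as $N_i$ does, so the reduction is parsimonious.

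Two points need care. First, inputs $n$ that are not valid codes must be rejected deterministically, so that they contribute $0$ paths without breaking the time bound. Second, the step I expect to be the main obstacle is the strong time bound: the cost of the moving block must be charged against padding that is uniform in $i$. I would handle this with a second clock of length $n$ that kills any path exceeding $Kn\log n$. The clock is maintained as a binary counter dragged along with the head.

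\emph{Part (iii).} I would use the same framework with the encoded machine replaced by a counting process of noncomputable growth. Fix a dovetailed enumeration of halting Turing machines and set $\alpha=\sum_{k\text{ halts}}2^{-k}$ (or a similar left-c.e.\ noncomputable real). On input $1^n$, $U$-style machinery computes $\bin(n)$ and then runs the halting enumeration for about $\log n$ steps. This yields a rational lower approximation $\alpha_n\nearrow\alpha$ with denominator $2^{O(\log n)}$. The machine then branches nondeterministically so that the number of accepting paths is $a_R(n)=\lfloor 2^{\beta_n n}\rfloor+1$ with $\beta_n=1+\alpha_n$. This is realized by guessing an $n$-bit string on the input cells and accepting iff it is below a threshold compared bitwise, which takes only linear time for the comparison sweep.

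Since $\beta_n\to 1+\alpha$ monotonically, $a_R(n)^{1/n}\to 2^{1+\alpha}$. This limit is noncomputable, and the radius of convergence of $A_R$ is its reciprocal. The obstacle here is computing the threshold within $O(n\log n)$. I would arrange that the threshold is a fixed-width prefix pattern of length $O(\log n)$ followed by free bits, so that the accept condition compares only $O(\log n)$ leading guessed bits against a computed string, with the remaining guesses free.
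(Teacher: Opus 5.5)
\textbf{Summary.} Your Part~(i) is the paper's argument: the time gap, followed by the Tadaki--Yamakami--Lin representation $a_M(n)=\pi A^n\eta$. The main line of your Part~(ii) is also the paper's: binary conversion in $O(n\log n)$, decoding and verification with deterministic rejection of invalid lengths, and a path-preserving simulation paid for by padding. There is a genuine gap in Part~(iii), and a secondary problem in Part~(ii).

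\textbf{Part (iii): the realization fails.} With $\beta_n=1+\alpha_n>1$, the target $\lfloor2^{\beta_nn}\rfloor+1$ exceeds $2^n$. So comparing a guessed $n$-bit string with that threshold accepts every string, giving $a_R(n)=2^n$. Your workaround does not repair this. If acceptance depends only on $O(\log n)$ leading guessed bits and the remaining guesses are free, then $a_R(n)=c_n2^{\,n-O(\log n)}$ with $1\le c_n\le\poly(n)$. Hence $a_R(n)^{1/n}\to2$, which is computable. A computed prefix of $O(\log n)$ bits changes the count only by a polynomial factor. To obtain a noncomputable growth rate, the approximation must determine the \emph{number} of free binary choices, which must be of order $n$.

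The paper does exactly this. After Lemma~\ref{lem:binary}, it computes $k(n)=\lfloor n\alpha_{s(n)}\rfloor$ with $s(n)=\lfloor\log_2n\rfloor$ in $\poly(\log n)$ time. It then makes exactly $k(n)$ binary choices, counting them down with a binary counter at $O(\log n)$ cost per choice. This gives $a_R(n)=2^{k(n)}$ with $k(n)/n\to\alpha$. Within your guess-on-the-input picture, you could instead force the first $n-k(n)$ guessed bits to be $0$, locating the cutoff with a counter. That only works for exponents at most $1$, not $1+\alpha$. With a counter, as in the paper, $k(n)$ may be any $O(n)$ quantity.

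\textbf{Part (ii): drop the second clock.} You should not rely on it. Killing a path that would have accepted destroys the equality $f(j)=a_U(p_f(j))$. So the clock is harmless only if it provably never fires, and in that case the padding has already established the time bound. It is also too expensive as described. A $\Theta(\log n)$-bit counter dragged with the head costs $\Theta(\log n)$ per step, so a budget of $Kn\log n$ steps would itself cost $\Theta(n\log^2 n)$.

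\textbf{Part (ii): what the padding must do.} The real requirement is that the padding dominate the overhead uniformly in the index. A factor $2^{(\text{description length})}$ is only polynomial in $i$. It need not dominate $m\,\poly(|i|)\log m$ at $j=1$, where $m=1+i$. A factor exponential in the index itself suffices, as in Beame et al.'s encoding, which satisfies $e(i,j)\ge 2^ij^{4i}(6j+1)$. The paper shows that with this encoding even a quadratic path-preserving simulation costs $A\,T_i(j)^2q(\log(i+2))\le B2^ij^{4i}\le BN$, for a constant $B$ independent of $i$ and $j$.
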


{\color{revisionthirtysix}
\revmerged{The tools are known. This note combines them to place a sharp $n\log n$ threshold on one-tape accepting-path counting, and supplies a one-tape implementation of Beame et al.'s construction that yields the completeness half of that threshold.} Part~\ref{item:rational} follows directly by combining the one-tape time gap with the linear-time counting theorem of Tadaki, Yamakami and Lin. \revmerged{For Part~\ref{item:universal}, the relevant ingredient is the universal counting construction of Beame et al.~\citeyearpar[Lemma~3.8]{BeameEtAl2015}, which encodes the accepting-path counts of arbitrary unary polynomial-time machines into those of one fixed linear-time machine; Section~\ref{sec:universal} gives the one-tape implementation while preserving these counts exactly.} Part~\ref{item:radius} is an independent elementary construction showing that the same time bound permits noncomputable asymptotic growth.
}

{\color{revisionten}
\revframing{This paper is organized as follows.} Section~\ref{sec:background} reviews the background needed for these statements. Section~\ref{sec:rational} derives the rationality consequence, Section~\ref{sec:universal} gives the $\hashPone$-complete one-tape construction, and Section~\ref{sec:radius} gives the noncomputable-growth construction.
}

\section{Background and preliminaries}
\label{sec:background}

\subsection{Tally counting and reductions}

{\color{revisionthree}
Throughout, $\hashPone$ denotes tally $\#\mathsf P$ with the exact unary input length as parameter.
Thus $f\in\hashPone$ if a nondeterministic machine has exactly $f(n)$ accepting paths on $1^n$ and every path runs in time polynomial in $n$.
The machine may use any fixed number of tapes. \revmerged{Equivalently, one may restrict to a read-only input tape and one work tape: the standard tape simulation can be carried out deterministically between nondeterministic choices, with polynomial slowdown and with accepting paths preserved one for one.}

\revfinal{For any such machine $M$, let $T_M(n)$ denote the maximum running time over all computation paths of $M$ on $1^n$. This is the {\em strong time measure} used below.}

We use {\em parsimonious polynomial-time tally reductions}.
A reduction from $f$ to $g$ maps $1^j$ to $1^{p(j)}$ in polynomial time and satisfies $f(j)=g(p(j))$. \revmerged{Equivalently, $p(j)$ is polynomially bounded and its binary value is computable in time polynomial in $j$.}
In our construction the binary value of $p(j)$ is in fact computable in time polynomial in $\log(j+2)$; \revfinal{producing the unary output $1^{p(j)}$ then takes time polynomial in $j$.}

We use ordinary generating functions to encode the counting sequences considered below. For a power series $F(z)=\sum_{n\geq0}c_nz^n$, we denote its radius of convergence by $\rad(F)$; the series converges absolutely for $|z|<\rad(F)$ and diverges for $|z|>\rad(F)$. In Section~\ref{sec:radius} we use the Cauchy--Hadamard formula
\[
 \frac{1}{\rad(F)}=\limsup_{n\to\infty}|c_n|^{1/n},
\]
which expresses the radius directly in terms of the coefficients.

\subsection{\texorpdfstring{\revfinal{Existing $\hashPone$-complete counting results}}{Existing \#P1-complete counting results}}
\label{sec:beame}

\revmerged{The closest precursor of our completeness result is Lemma~3.8 of Beame et al.~\citeyearpar{BeameEtAl2015}. They construct a fixed linear-time counting machine $U_1$ on unary inputs whose accepting-path function is $\hashPone$-hard. Their proof enumerates clocked counting machines and encodes the machine index together with its source input length into the unary input length of $U_1$. Although hardness is stated there under polynomial-time oracle reductions, the proof gives the stronger exact correspondence needed here: for every $g\in\hashPone$, one fixed machine index $i$ satisfies $g(j)=a_{U_1}(e(i,j))$ at all encoded lengths, for the explicit encoding $e$ recalled in Section~\ref{sec:universal}. We revisit that proof directly in Section~\ref{sec:universal}, retaining their clocked enumeration and their encoding rather than generically simulating the completed machine $U_1$.}

Beame et al. subsequently use their universal-counter construction to obtain a $\hashPone$-hard first-order model-counting result for a fixed sentence with three variables. We use only the machine construction, not the later encoding into first-order models.

Another relevant source of $\hashPone$-hard counting sequences is the work of Bertoni and Goldwurm \citeyearpar{BertoniGoldwurm1993}. For a fixed language $L\subseteq\Sigma^*$, they consider the exact-length counting function
\[
 c_L(n)=|L\cap\Sigma^n|.
\]
\revmerged{They show that even very weak language classes, including languages recognized by one-way deterministic two-head finite automata, contain a language with a $\hashPone$-complete exact-length counting function under one-query polynomial-time Turing reductions. In their Proposition~8, accepting computations of a source polynomial-time counting machine are encoded by words of a prescribed length in a language recognized by such a device, giving an exact equality between the source count and the number of accepted words at that length. Choosing a fixed $\hashPone$-complete source function therefore yields a fixed language with a $\hashPone$-complete counting sequence. This is closely related in spirit to our construction, but it concerns a different counting object and a different reduction notion.}

\subsection{\texorpdfstring{One-tape machines and the $n\log n$ boundary}{One-tape machines and the n log n boundary}}

{\color{revisionthirtysix}By a {\em one-tape machine} we mean a Turing machine with a single two-way infinite read/write tape, initially containing the input, which is used for both input and work. The head starts at the first input cell. \revseam{One-tape input/work models of this kind are often called ``off-line'' Turing machines in the one-tape literature \citep{Pighizzini2009}.}}
\revreadability{As usual, we may encode a fixed amount of auxiliary information in each tape cell by enlarging the tape alphabet. We allow the tape head to remain in place during a step, and we use designated accepting and rejecting halting states. These conventions do not affect the results.\footnote{A stay-put move can be replaced by a fixed sequence of ordinary moves, with only constant overhead and without changing the number of accepting computation paths.}}

\revreadability{We use two established facts about this model. \revflow{First, Gajser's one-tape time-gap theorem states that if a nondeterministic one-tape machine runs in strong $o(n\log n)$ time, then it in fact runs in $O(n)$ time} \citeyearpar[Corollary~5.1.6]{GajserThesis}. The proof uses crossing-sequence techniques; we refer to \citet{GajserThesis} for details.}

\revglobal{Second, Tadaki, Yamakami and Lin show that the accepting-path counting function of a nondeterministic one-tape machine running in strong linear time has a finite-dimensional linear representation \citeyearpar[proof of Lemma~7.3]{TadakiYamakamiLin2010}. \revreferee{In their counting construction, the entries record numbers of local computation paths, so the vectors and matrices can be taken to have nonnegative integer entries.} More precisely, there are fixed vectors $\pi,\eta$ and, for each input symbol $\sigma$, a fixed matrix $T(\sigma)$ such that the number of accepting computation paths on $x=x_1\cdots x_n$ is
\[
 \pi T(x_1)\cdots T(x_n)\eta.
\]
For unary inputs there is only one transition matrix. Writing $A=T(1)$, we therefore have
\[
 a_M(n)=\pi A^n\eta.
\]}
}

\section{Rationality below \texorpdfstring{$n\log n$}{n log n}}
\label{sec:rational}

{\color{revisionthree}
\revglobal{The two ingredients needed here were stated in Section~\ref{sec:background}: the strong $o(n\log n)$-to-$O(n)$ runtime gap and the \revreferee{linear representation} of strong linear-time one-tape counting.}
}

\begin{proof}[Proof of Theorem~\ref{thm:main}\ref{item:rational}]
\revglobal{The runtime gap puts $M$ in strong linear time. By the linear representation recalled in Section~\ref{sec:background}, there are fixed \revreferee{nonnegative integer} vectors $\pi,\eta$ and a fixed \revreferee{nonnegative integer} matrix $A$ such that $a_M(n)=\pi A^n\eta$. Therefore
\[
 A_M(z)=\sum_{n\geq0}a_M(n)z^n
 =\pi\left(\sum_{n\geq0}(zA)^n\right)\eta
 =\pi(I-zA)^{-1}\eta.
\]
Since $A$ is a finite \revreferee{integer} matrix, every entry of $(I-zA)^{-1}$ is a rational function of $z$, and hence so is $A_M(z)$.}
\end{proof}

\begin{corollary}\label{cor:rational}
If $T_M(n)=o(n\log n)$, then $a_M$ satisfies a constant-coefficient linear recurrence and is computable in time polynomial in the unary input length.
Moreover, $\rad(A_M)$ is either infinite or a positive computable algebraic number.
\end{corollary}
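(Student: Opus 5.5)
We start from the representation $a_M(n)=\pi A^n\eta$ with nonnegative integer $\pi,\eta,A$, which the proof of Theorem~\ref{thm:main}\ref{item:rational} supplies once $T_M(n)=o(n\log n)$. Let $A$ be $d\times d$ and let $\chi_A(x)=x^d-c_1x^{d-1}-\cdots-c_d$ be its characteristic polynomial, which has integer coefficients. By Cayley--Hamilton, $A^{n+d}=c_1A^{n+d-1}+\cdots+c_dA^n$. Multiplying on the left by $\pi$ and on the right by $\eta$ gives the recurrence $a_M(n+d)=\sum_{i=1}^d c_i a_M(n+d-i)$ with constant integer coefficients.

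For the complexity claim, $M$ is fixed, so $d$, the $c_i$ and the initial values $a_M(0),\dots,a_M(d-1)$ are constants. Evaluating the recurrence up to index $n$ takes $O(n)$ arithmetic operations. Each $|a_M(k)|\le \|\pi\|\,\|A\|^k\|\eta\|$, so the integers involved have $O(k)$ bits. The total time is therefore polynomial in $n$, which is polynomial in the unary input length. Alternatively, repeated squaring of $A$ with exact integer arithmetic gives the same bound.

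For the radius, write $A_M(z)=\pi(I-zA)^{-1}\eta=P(z)/Q(z)$ with $Q(z)=\det(I-zA)$, so $Q(0)=1$, and cancel common factors so the fraction is in lowest terms over $\mathbb Q$. Two cases arise.
\begin{enumerate}[label=\textup{(\alph*)},leftmargin=*]
\item If the reduced denominator is constant, $A_M$ is a polynomial and $\rad(A_M)=\infty$.
\item Otherwise, $A_M$ is meromorphic on $\mathbb C$ with poles exactly at the roots of the reduced denominator, and none of them is $0$. The radius of a rational function's Taylor series at $0$ is the distance to its nearest pole, so $\rad(A_M)=\min\{|\zeta|: Q_{\mathrm{red}}(\zeta)=0\}>0$. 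Each root $\zeta$ is algebraic, and so is $|\zeta|=\sqrt{\zeta\bar\zeta}$, since algebraic numbers form a field closed under conjugation and square roots. A minimum of finitely many algebraic reals is one of them, hence algebraic.
\end{enumerate}

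The main obstacle is computability, but it is routine. The real algebraic numbers $|\zeta|$ can be computed to any precision, for example by isolating roots of the integer polynomial $Q_{\mathrm{red}}$ with Sturm sequences or by refining rational interval approximations. The minimum can then be identified because distinct algebraic reals are eventually separated by such approximations. Nonnegativity of the coefficients also gives a shortcut: by Pringsheim's theorem the nearest singularity lies on the positive real axis. Hence $\rad(A_M)$ is the smallest positive real root of $Q_{\mathrm{red}}$, which is directly computable by real root isolation.
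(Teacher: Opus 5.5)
Your proposal is correct and follows the paper's proof. Both arguments get the recurrence from the linear representation (your Cayley--Hamilton step is just the denominator $\det(I-zA)$ made explicit), evaluate $a_M(n)$ exactly with $O(n)$-bit integers, and identify $\rad(A_M)$ as the minimum modulus of a root of the reduced denominator. Your remarks on the algebraicity of $|\zeta|$, root isolation, and Pringsheim's theorem only spell out details the paper leaves implicit.
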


\begin{proof}
The recurrence follows from the rational generating function.
\revproof{\revreferee{For the complexity claim, compute $\pi A^n\eta$ by repeated multiplication by the fixed integer matrix $A$. The dimension and entries are fixed, so all intermediate integers have $O(n)$ bits. This gives an exact polynomial-time algorithm.}}
Since $A_M$ is rational, there exist coprime polynomials $P,Q\in\mathbb Q[z]$ such that
\[
 A_M(z)=\frac{P(z)}{Q(z)}
\]
and $Q(0)\neq0$.
If $Q$ is constant, the radius is infinite.
Otherwise, the radius is the minimum modulus of a zero of $Q$, and is therefore a positive computable algebraic real.
\end{proof}

\rev{Thus a $\hashPone$-hard counting function below the threshold, under polynomial-time oracle reductions, would make every $\hashPone$ function computable in deterministic polynomial time on its unary input.}
The rationality and radius statements themselves require no complexity separation assumption.

\section{Counting at \texorpdfstring{$O(n\log n)$}{O(n log n)} time}
\label{sec:universal}

\revmerged{A generic path-preserving simulation of the linear-time machine $U_1$ constructed by Beame et al.~\citeyearpar[Lemma~3.8]{BeameEtAl2015} would give an $O(N^2)$-time one-tape machine on inputs of length $N$; this is the route used by Kuang et al.~\citeyearpar[proof of Lemma~3.2]{KuangEtAl2025}. Instead, we follow the proof of Beame et al. before the final universal machine is assembled. Their construction already supplies the clocked machines and an encoding with enough slack to pay for a quadratic one-tape simulation of the selected clocked machine. As we show in this section, the only remaining step whose cost increases asymptotically in our model is the initial conversion of the unary input length to binary.}

\begin{lemma}\label{lem:binary}
A deterministic one-tape machine can compute $\bin(n)$ from $1^n$ in $O(n\log n)$ time.
It can finish with this binary word in a compact work area and the remaining tape blank.
\end{lemma}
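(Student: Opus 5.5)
The approach is the classical repeated-halving method, in which each pass over the input extracts one bit of $n$. We keep the input $1^n$ in place and mark its left end with a special symbol in an adjacent cell (enlarging the alphabet as permitted in Section~\ref{sec:background}). Each pass sweeps left to right over the block of still-unmarked ones and crosses out every second one. A finite-state parity bit records whether the number of unmarked ones seen so far is odd. At the right end, that parity is the current least significant bit of the remaining count. After the pass, the number of unmarked ones is $\lfloor m/2\rfloor$, where $m$ was the count before the pass. After $\lceil\log_2(n+1)\rceil$ passes no unmarked ones remain, and the parities read in order are the bits of $\bin(n)$ from least to most significant. The case $n=0$ is handled directly by writing $0$.

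We store each extracted bit in a binary work area to the left of the input. After each pass the head walks from the right end of the input to that area, appends the bit, and returns. Each pass then costs $O(n+\log n)$ steps, namely one sweep over the input plus a walk of length at most $n+O(\log n)$ to the work area and back. With $O(\log n)$ passes the total is $O(n\log n)$. Bits are appended on the side of the work area facing away from the input (or the work area is later reversed in $O(\log^2 n)$ time), so the final word reads in the standard most-significant-first order. The termination test is a pass that finds no unmarked ones, detected by a flag in the finite control.

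Finally we clean up. A single right-to-left sweep of length $O(n)$ erases all crossed-out cells and the end marker, and the head parks next to the work area. The result is the compact binary word $\bin(n)$ occupying $O(\log n)$ consecutive cells with the rest of the tape blank, for a total time of $O(n\log n)$.

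The only delicate point is keeping the shuttle cost per pass linear. If the work area were placed far away, or grew into the input region, a pass could cost more than $O(n)$. Placing the work area immediately to the left of the input solves this: its length $O(\log n)$ adds only $O(\log^2 n)$ overall, and the crossed-out cells are never compacted during the passes. Everything else is routine finite-control bookkeeping.
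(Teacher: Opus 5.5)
Your proposal is correct and follows essentially the same route as the paper. Both proofs halve the active marks each round while recording $m \bmod 2$ in the finite control, run $O(\log n)$ rounds of cost $O(n+\log n)$ each with a compact output area, and finish with an erasing sweep and an optional $O((\log n)^2)$ reversal. Your remaining details, such as placing the work area just left of the input and crossing out the first, third, fifth, \ldots\ active ones so that exactly $\lfloor m/2\rfloor$ remain, only make explicit what the paper leaves implicit.
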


\begin{proof}
We mark the ends of the original input interval and initially regard every input cell as active.
In one $O(n)$-time round, the machine retains every second active mark and records the parity of the number of active marks in the finite control.
If there were $m$ active marks, this leaves $\lfloor m/2\rfloor$ active marks and produces the next binary digit $m\bmod2$, which is appended to a compact output area.
After $O(\log n)$ rounds no active marks remain.
Each round costs $O(n+\log n)$, including access to the output area.
A final sweep erases the input and scratch information.
Reversing the $O(\log n)$ output bits, if needed, costs $O((\log n)^2)$.
\end{proof}

\begin{proof}[Proof of Theorem~\ref{thm:main}\ref{item:universal}]
\revmerged{We follow the proof of Beame et al.~\citeyearpar[Lemma~3.8]{BeameEtAl2015}, recalling only the ingredients needed for the one-tape implementation. Their construction starts with a standard enumeration
\[
 P_1,P_2,\ldots
\]
of unary counting machines. It then considers all pairs $(P_r,s)$, where $s\geq1$ specifies a polynomial clock, and effectively enumerates these pairs by indices $i$ chosen so that $r,s\leq i$. On input $1^j$, the clocked machine associated with $(P_r,s)$ simulates $P_r$ for at most $sj^s+s$ steps, rejecting a computation path if it has not halted by then. This still captures every function in $\hashPone$: for any polynomial-time counting machine $P_r$, one can choose $s$ large enough that the cutoff never interrupts any of its computation paths. Maintaining the clock incurs at most a quadratic slowdown. Thus Beame et al. obtain clocked machines
\[
 M_1,M_2,\ldots
\]
such that every function in $\hashPone$ is computed by some $M_i$ and every computation path of $M_i$ on input $1^j$ has length at most
\begin{equation}\label{eq:beame-clocked-bound}
 T_i(j)=(ij^i+i)^2.
\end{equation}}

\revmerged{Beame et al. then combine these clocked machines into one fixed universal counting machine $U_1$. Its unary input length specifies both which machine $M_i$ is to be simulated and the input length $j$ for that machine. To encode this information, they use
\begin{equation}\label{eq:encoding}
 e(i,j)=2^i3^{4i\lceil\log_3j\rceil}(6j+1).
\end{equation}
On input $1^{e(i,j)}$, the machine $U_1$ recovers $i$ and $j$ and simulates $M_i$ on $1^j$. Consequently, for every $g\in\hashPone$, some fixed index $i$ satisfies
\[
 g(j)=a_{U_1}(e(i,j))\qquad(j\geq1).
\]
Lemma~3.8 of \citet{BeameEtAl2015} also establishes that $i$ and $j$ can be recovered once the input length is available in binary, and that, for each fixed $i$, $j\mapsto e(i,j)$ is polynomially bounded and polynomial-time computable. We shall also use the immediate estimate
\begin{equation}\label{eq:encoding-slack}
 e(i,j)\geq 2^i j^{4i}(6j+1),
\end{equation}
which follows from $3^{\lceil\log_3j\rceil}\geq j$.}

\revmerged{We now implement this encoded simulation by a fixed one-tape machine $U$. On input $1^N$, it first computes $\bin(N)$ by Lemma~\ref{lem:binary}. \revmerged{It applies the decoding procedure of Beame et al. to obtain a candidate pair $(i,j)$ and verifies that $N=e(i,j)$; if the verification fails, it rejects deterministically.} Once $\bin(N)$ is available, this arithmetic uses only numbers of $O(\log N)$ bits, so any fixed polynomial-time implementation costs $O(N)$ time. On a valid input it recovers the pair $(P_r,s)$ indexed by $i$. Using a standard binary encoding, the description of $P_r$ can be recovered in $\poly(\log(i+2))$ time, since $r\leq i$, while the input $1^j$ can be prepared from the binary representation of $j$ in $O(j^2)$ time. Both costs are $O(N)$, since $N\geq2^i$ and, by \eqref{eq:encoding-slack}, $N\geq 2j^4(6j+1)$.}

\revmerged{It remains to simulate $M_i$. A standard path-preserving one-tape simulation of its read-only input tape and work tape takes quadratic time in the running time of $M_i$, with an additional overhead polynomial in the size of its description. Thus there are a constant $A>0$ and a fixed polynomial $q$, independent of $i$ and $j$, such that the simulation time is at most $A T_i(j)^2 q(\log(i+2))$. Since $j^i\geq1$, we have $T_i(j)^2=(ij^i+i)^4\leq16i^4j^{4i}$. Because $q$ is fixed, there is a constant $B$ such that $16A i^4q(\log(i+2))\leq B2^i$ for every $i\geq1$. Hence, using \eqref{eq:encoding-slack},
\[
 A T_i(j)^2q(\log(i+2))
 \leq B2^i j^{4i}
 \leq B e(i,j)=BN.
\]
Thus the simulation takes $O(N)$ time with a constant independent of $i$ and $j$. It uses deterministic bookkeeping except when it reproduces a nondeterministic transition of $M_i$, so it preserves computation paths and their accepting or rejecting outcomes one for one.}

\revmerged{Now fix $f\in\hashPone$. By the enumeration of Beame et al., some fixed $M_i$ computes $f$, and the universal machine $U_1$ therefore satisfies $f(j)=a_{U_1}(e(i,j))$ for every $j\geq1$. Our one-tape simulation preserves the same computation paths, so
\[
 f(j)=a_U(e(i,j)).
\]
Set $p_f(j)=e(i,j)$. The index $i$ depends only on $f$ and is therefore constant with respect to $j$. Moreover,
\[
 p_f(j)=2^i3^{4i\lceil\log_3 j\rceil}(6j+1)
 \leq 2^i3^{4i}j^{4i}(6j+1),
\]
so $p_f$ is polynomially bounded in $j$. Its binary value is computable in time polynomial in $\log(j+2)$. Producing the unary output $1^{p_f(j)}$ therefore takes polynomial time in $j$, so $p_f$ is a parsimonious polynomial-time tally reduction. Finally, the initial unary-to-binary conversion costs $O(N\log N)$ and every subsequent stage costs $O(N)$; hence $U$ runs in strong $O(N\log N)$ time. \revmerged{In particular $a_U\in\hashPone$; together with the parsimonious hardness above, this proves $\hashPone$-completeness.}}
\end{proof}

\section{A noncomputable exponential growth rate}
\label{sec:radius}

A real is computable if an algorithm can, on input $m$, produce a rational approximation with error at most $2^{-m}$.
A computable sequence of coefficients can have a noncomputable radius of convergence.
For example, Gra\c{c}a, Zhong and Buescu \citeyearpar[Lemma~4.3]{GracaZhongBuescu2009} use such a power series in their study of computable analytic differential equations.
We construct positive integer coefficients with this property that are accepting-path counts under the one-tape time bound of Theorem~\ref{thm:main}.
The construction uses textbook computability-theoretic ingredients, essentially the classical Specker phenomenon that a computable increasing sequence of rationals can have a noncomputable limit \citep{Specker1949}; the point here is that they can be implemented within the present one-tape time bound.

\revrepair{Our construction of a sequence with noncomputable exponential growth separates the computation into two parts. On input $1^n$, the machine first performs a deterministic computation that produces an integer $k(n)$. It then makes exactly $k(n)$ binary nondeterministic choices and accepts on every resulting path. Its number of accepting paths is therefore}
\[
 a_n=2^{k(n)}.
\]
\revrepair{We therefore seek $k(n)$ such that $k(n)/n\to\alpha$ for a noncomputable real $\alpha$; this will give $a_n^{1/n}\to2^\alpha$.}

\revrepair{The machine cannot use $\alpha$ directly in its deterministic preprocessing. Instead, it uses computable finite approximations $\alpha_s\to\alpha$. This does not make $\alpha$ computable: although each $\alpha_s$ is computable, there is no computable bound on how large $s$ must be to achieve a prescribed accuracy. On input $1^n$, the machine chooses}
\[
 s(n)=\lfloor\log_2 n\rfloor
\]
\revrepair{and sets $k(n)=\lfloor n\alpha_{s(n)}\rfloor$. We define $\alpha$ and the approximations $\alpha_s$ explicitly in the proof below.}

\begin{proof}[Proof of Theorem~\ref{thm:main}\ref{item:radius}]
\revproof{We fix a binary transition-table encoding of deterministic Turing machines.}
\revproof{Enumerate all finite binary strings as $w_0,w_1,\ldots$ in length-lexicographic order. Given the index $e$, the string $w_e$ can be recovered in time polynomial in $\log(e+2)$. Let $Q_e$ be the machine described by $w_e$, with malformed descriptions denoting a fixed nonhalting machine.}
Consequently
\[
 K=\{e:Q_e\text{ halts on the empty input}\}
\]
is undecidable.
\revproof{We define}
\begin{equation}\label{eq:alpha}
 \alpha=\frac14+\sum_{e\in K}4^{-(e+2)},
 \qquad \frac14\leq\alpha\leq\frac13.
\end{equation}
This real is noncomputable.
\revproof{Indeed, suppose we have determined membership in $K$ for indices below $e$. We subtract their contributions and $1/4$ from $\alpha$, and multiply by $4^{e+2}$.}
The result is
\[
 \mathbf1_K(e)+\sum_{r>e}\mathbf1_K(r)4^{e-r}.
\]
Consequently, the displayed quantity is at most $1/3$ when $e\notin K$, and at least $1$ when $e\in K$.
An algorithm approximating $\alpha$ to arbitrary prescribed accuracy would decide these cases successively and hence decide $K$.

\revproof{For each $s\geq0$, we define the approximation $\alpha_s$ by retaining only computations that halt within $s$ steps:}
\begin{equation}\label{eq:stages}
 \alpha_s=\frac14+
 \sum_{\substack{e\leq s\\Q_e\text{ halts within }s\text{ steps}}}4^{-(e+2)}.
\end{equation}
Then $\alpha_s$ increases to $\alpha$.
We can compute $\alpha_s$ exactly in time polynomial in $s$: simulate the finitely many machines $Q_0,\ldots,Q_s$ for $s$ steps, then add the indicated fractions.
The descriptions have $O(\log(s+2))$ bits, and the common denominator $4^{s+2}$ and the numerator have $O(s+1)$ bits.
A fixed one-tape implementation of these finite simulations and arithmetic operations still takes polynomial time in $s$.

\revproof{For input length $n$, we choose $s(n)=\lfloor\log_2 n\rfloor$. Then $s(n)\to\infty$, so $\alpha_{s(n)}\to\alpha$, while computing $\alpha_{s(n)}$ takes only $\poly(\log n)$ time.}
We now realize the sequence described above by a single one-tape machine $R$. On the empty input, $R$ accepts immediately.
\revproof{On input $1^n$ with $n\geq1$, the machine first computes $\bin(n)$ and clears the scratch area using Lemma~\ref{lem:binary}. It then computes
\[
 s(n)=\lfloor\log_2 n\rfloor,
 \qquad
 k(n)=\lfloor n\alpha_{s(n)}\rfloor,
\]
including the exact rational $\alpha_{s(n)}$ needed to obtain $k(n)$.}
This further work takes $\poly(\log n)$ time.
\revproof{Finally, it makes exactly $k(n)$ binary choices and accepts on every resulting path.}
Each choice uses two distinct successor states that deterministically return to the same countdown procedure.
There are exactly $2^{k(n)}$ accepting paths.
Since $k(n)\leq n/3$, maintaining a binary counter for the remaining choices takes $O(\log n)$ time per choice, and hence $O(n\log n)$ time altogether.

It remains to compute the growth rate.
For every $n\geq1$,
\[
 0\leq\alpha_{s(n)}-\frac{k(n)}n<\frac1n.
\]
As $s(n)\to\infty$, it follows that $k(n)/n\to\alpha$, and therefore
\[
 \lim_{n\to\infty}a_n^{1/n}=2^\alpha,
 \qquad
 \rad\!\left(\sum_{n\geq0}a_nz^n\right)=2^{-\alpha}
\]
by the Cauchy--Hadamard formula.
Both reals are noncomputable: computability of either one, followed by the computable logarithm on a positive interval bounded away from zero, would make $\alpha$ computable.
\end{proof}

Each coefficient in this construction is individually easy to compute: its binary representation is a $1$ followed by $k(n)$ zeros.
The noncomputability occurs in the limiting growth rate. \revmerged{Together with Theorem~\ref{thm:main}\ref{item:rational}, this places noncomputable exponential growth on the $O(n\log n)$ side of the same one-tape threshold.}
\section*{Acknowledgments}
The author was funded by the Czech Science Foundation project 24-11820S (``Automatic Combinatorialist'').

\section*{Acknowledgment of AI Use}
This note arose from a broader exploratory research project in which the author developed ideas and directions in dialogue with AI tools, including Sol, Astra, and Fable. The manuscript was then developed iteratively, paragraph by paragraph, with these tools assisting with exposition and critique. The author independently verified the mathematical claims and cited sources, made all final decisions, and takes full responsibility for the paper.

\bibliographystyle{abbrvnat}
\bibliography{accepting_path_one_tape}
\end{document}